\newtheorem{theorem}{Theorem}[section]
\newtheorem{lemma}{Lemma}[section]
\newtheorem{definition}{Definition}[section]
\newtheorem{fact}{Fact}[section]
\newcommand{\qedsymb}{\hfill{\rule{2mm}{2mm}}}
\newenvironment{proofsketch}{\begin{trivlist}
\item[\hspace{\labelsep}{\noindent Proof Sketch: }]
}{\qedsymb\end{trivlist}}
\newcommand{\remove}[1]{}
\DeclareMathOperator{\sa}{\mathit{SA}}
\DeclareMathOperator{\rank}{\mathit{Rank}}
\DeclareMathOperator{\lcp}{\mathit{LCP}}
\DeclareMathOperator{\lr}{\mathit{LR}}
\DeclareMathOperator{\llr}{\mathit{LLR}}
\DeclareMathOperator{\llrs}{\mathit{LLRS}}
\DeclareMathOperator{\lrs}{\mathit{LRS}}
\begin{document}

\begin{frontmatter}

  \title{On Longest Repeat Queries\tnoteref{t1}}

 \tnotetext[t1]{Author names are in alphabetical order. 
}

\author[a1]{Atalay Mert \.{I}leri}
\address[a1]{Massachusetts Institute of Technology, USA}
\ead{atalay@mit.edu}

\author[a2]{M. O\u{g}uzhan  K\"{u}lekci}
\address[a2]{Istanbul Medipol University, Turkey}
\ead{okulekci@medipol.edu.tr}

\author[a3]{Bojian Xu\corref{c1}} 
\address[a3]{
Eastern Washington University, USA}
\ead{bojianxu@ewu.edu} \cortext[c1]{Corresponding author. Mailing address: 319F CEB,
  Eastern Washington University,
  Cheney, WA 99004, USA. Phone:
  +1 (509) 359-2817. Fax: +1 (509) 359-2215. 
}

\begin{abstract} 
  Repeat finding in strings has important applications in subfields
  such as computational biology. Surprisingly, all prior work on
  repeat finding did not consider the constraint on the locality of
  repeats. In this paper, we propose and study the problem of finding
  longest repetitive substrings covering particular string positions.
  We propose an $O(n)$ time and space algorithm for finding the
  longest repeat covering every position of a string of size $n$. Our
  work is optimal since the reading and the storage of an input string
  of size $n$ takes $O(n)$ time and space.  Because any
  substring of a repeat is also a repeat, our solution to longest
  repeat queries effectively provides a ``stabbing'' tool for
  practitioners for finding most of the repeats that cover particular string
  positions.
 \end{abstract}

 \begin{keyword}
   information retrieval\sep string processing \sep repeats \sep
   regularities \sep repetitive structures
 \end{keyword}

\end{frontmatter}

\section{Introduction}

Repetitive structures and regularities finding in genomes and proteins
is important as these structures play important roles in the 
biological functions of genomes and proteins~\cite{Gus97}.  It is
well known that overall about one-third of the whole human genome
consists of repeated subsequences~\cite{McC1993}; about 10--25\% of
all known proteins have some form of repetitive
structures~\cite{LW06}. In addition, a number of significant problems
in molecular sequence analysis can be reduced to repeat
finding~\cite{Mar83}. Another motivation for finding repeats is to
compress the DNA sequences, which is known as one of the most
challenging tasks in the data compression field. DNA sequences consist
only of symbols from {\tt \{ACGT\}} and therefore can be represented
by two bits per character. Standard compressors such as {\tt gzip} and
{\tt bzip} usually use more than two bits per character and therefore
cannot reach good compression. Many modern genomic sequence data
compression techniques highly rely on the repeat finding in the
sequences~\cite{MR04,BF05}.

The notion of maximal repeat and super maximal
repeat~\cite{Gus97,KVX-BIBM2010,KVX-tcbb2012,BBO-spire2012} captures
all the repeats of the whole string in a space-efficient manner, but
it does not track the locality of each repeat and thus can not support
the finding of repeats that cover a particular string position.
In this paper, we propose and study the problem of finding longest
repetitive substrings covering any particular string positions.
Because any substring of a repeat is also a repeat, the solution to
longest repeat queries effectively provides a ``stabbing'' tool for
practitioners for finding most of the repeats that cover particular
string positions.
\remove{
The recent study of finding shortest unique substrings (SUS) covering
particular string positions~\cite{PWY-ICDE2013,
TIBT2014,IKX-CPM2014} may help find some repeats
covering particular string positions, because any extension of a SUS
must be a repeat.  However, such trivial reduction does not guarantee
the finding of the longest repeat that covers an arbitrary string
position. In fact, it is not yet well understood how to reduce longest
repeat finding to SUS finding.
}

In this paper, we propose an $O(n)$ time and space algorithm that can
find the \emph{leftmost} longest repeat of every string
position. We view our solution to be optimal in both time and space, 
because one has to spend $\Omega(n)$ time and space to read
and store the input string.


\section{Preliminary}
\label{sec:prelim}
We consider a {\bf string} $S[1\ldots n]$,
 where each character $S[i]$ is
drawn from an alphabet $\Sigma=\{1,2,\ldots, \sigma\}$.
A {\bf substring} $S[i\ldots j]$
of $S$ represents $S[i]S[i+1]\ldots S[j]$ if $1\leq i\leq j \leq n$,
and is an empty string if $i>j$.
String $S[i'\ldots j']$ is a {\bf proper substring} of another string
$S[i\ldots j]$ if $i\leq i' \leq j' \leq j$ and $j'-i' < j-i$. 
%
%
The {\bf length} of a non-empty substring $S[i\ldots j]$, denoted as
$|S[i\ldots j]|$, is $j-i+1$. We define the length of an empty string
as zero. 
A {\bf prefix} of $S$ is a substring $S[1\ldots i]$
for some $i$, $1\leq i\leq n$. 
A {\bf proper prefix} $S[1\ldots i]$ is a prefix of $S$ where $i <
n$.
A {\bf suffix} of $S$ is a substring
$S[i\ldots n]$ for some $i$, $1\leq i\leq n$.  
A {\bf proper suffix} $S[i\ldots n]$ is a suffix of $S$ where $i >
1$.
We say the character $S[i]$ occupies the string {\bf position} $i$.
We say the substring $S[i\ldots j]$ {\bf covers} the $k$th position of
$S$, if $i\leq k \leq j$.  
For two strings $A$ and $B$, we write ${\bf A=B}$ (and say $A$ is {\bf
  equal} to $B$), if $|A|= |B|$ and $A[i]=B[i]$ for 
$i=1,2,\ldots, |A|$.  
%
We say $A$ is lexicographically smaller than $B$,
denoted as ${\bf A < B}$, if (1) $A$ is a proper prefix of $B$, or (2)
$A[1] < B[1]$, or (3) there exists an integer $k > 1$ such that
$A[i]=B[i]$ for all $1\leq i \leq k-1$ but $A[k] < B[k]$.
A substring
$S[i\ldots j]$ of $S$ is {\bf unique}, if there does not exist
another substring $S[i'\ldots j']$ of $S$, such that 
$S[i\ldots j] = S[i'\ldots j']$ but $i\neq i'$. 
A substring is a {\bf repeat} if it is not unique.
A character $S[i]$ is a {\bf singleton}, if it appears only once in
$S$.

\begin{definition}
\label{def:lr}
For a particular string position $k\in \{1,2,\ldots, n\}$,  
the {\bf longest repeat (LR) covering position} ${\bf k}$, denoted
as $\mathit{\bf \lr}_{\bf k}$, is 
a repeat substring $S[i\ldots j]$, such that: (1) $i\leq k \leq j$, and 
(2) there does not exist  another repeat substring $S[i'\ldots j']$, such
that $i'\leq k \leq j'$ and $j'-i' > j-i$. 
\end{definition}

\begin{definition}
\label{def:llr}
For a particular string position $k\in \{1,2,\ldots, n\}$, the
{\bf left-bounded longest repeat (LLR) starting at position $k$},
denoted as ${\mathit{\bf LLR}_{\bf k}}$, is a repeat $S[k\ldots j]$,
such that either $j=n$ or $S[k\ldots j+1]$ is unique. 
\end{definition}

Obviously, 
for any string position $k$, if $S[k]$ is not a singleton, both
$\lr_k$ and $\llr_k$
must exist, because at least $S[k]$ itself is a repeat. 
Further, there might be multiple choices for $\lr_k$. For
example, if $S={\tt abcabcddbca}$, then $\lr_2$ can be either
$S[1\ldots 3]={\tt
  abc}$ or $S[2\ldots 4]={\tt bca}$.
However, if $\llr_k$ does exist, it
must have only one choice, because $k$ is a fixed string position and
the length of $\llr_k$ must be as long as possible.

The {\bf suffix array} $\sa[1\ldots n]$ of the string $S$ is a
permutation of $\{1,2,\ldots, n\}$, such that for any $i$ and $j$,
$1\leq i < j \leq n$, we have $S[\sa[i]\ldots n] < S[\sa[j]\ldots n]$.
That is, $\sa[i]$ is the starting position of the $i$th suffix in
the sorted order of all the suffixes of $S$.
The {\bf rank array} $\rank[1\ldots n]$ is the inverse of the suffix
array. That is, $\rank[i]=j$ iff $\sa[j]=i$. 
The {\bf longest common prefix (lcp) array} $\lcp[1\ldots n+1]$ is an
array of $n+1$ integers, such that for $i=2,3,\ldots, n$, $\lcp[i]$ is
the length of the lcp of the two suffixes $S[\sa[i-1]\ldots n]$ and
$S[\sa[i]\ldots n]$. We set $\lcp[1]=\lcp[n+1]=0$.  In the literature,
the lcp array is often defined as an array of $n$ integers. We include
an extra zero at $\lcp[n+1]$ is only to simplify the description 
of our upcoming
algorithms.  
Table~\ref{tab:suflcp}
in the appendix 
shows the suffix array and the lcp
array of the example string {\tt mississippi}.

The next Lemma~\ref{lem:llr} shows that, by using the rank array and
the lcp array of the string $S$, it is easy to calculate any $\llr_i$ if
it exists or to detect the fact that it does not exist.

\begin{lemma}
\label{lem:llr}
For $i=1,2,\ldots,n$: 
$$
\llr_i = 
\left \{
\begin{array}{lll}
S[i\ldots i + L_i-1] & , & \textrm{\ \ \ if \ \ } L_i > 0\\
\textit{does not exist} & , & \textrm{\ \ \ if \ \ }L_i = 0
\end{array}
\right.
$$
where $L_i = \max\{\lcp[\rank[i]],\lcp[\rank[i]+1]\}$.
\end{lemma}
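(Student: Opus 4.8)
The plan is to reduce the statement to a standard property of the lcp array, after first recasting the definition of $\llr_i$ in terms of the repeat-prefixes of the suffix $S[i\ldots n]$. First I would observe that the defining condition makes $\llr_i$ the \emph{longest} prefix of $S[i\ldots n]$ that is itself a repeat: since any substring of a repeat is a repeat, if $S[i\ldots j]$ is a repeat then so is every prefix $S[i\ldots j']$ with $j'\le j$. Hence the repeat-prefixes of $S[i\ldots n]$ are exactly $S[i\ldots i],\, S[i\ldots i+1],\ldots$ up to some maximal length, and $\llr_i$, when it exists, is precisely the one of maximal length. It therefore suffices to prove that the length of the longest repeat-prefix of $S[i\ldots n]$ equals $L_i$, with the convention that length $0$ means no such prefix exists.

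Next I would connect ``repeat-prefix'' to the suffix array. A prefix $S[i\ldots i+\ell-1]$ of length $\ell\ge 1$ is a repeat exactly when it occurs at two distinct starting positions of $S$, i.e.\ exactly when some suffix $S[i'\ldots n]$ with $i'\ne i$ shares a common prefix of length at least $\ell$ with $S[i\ldots n]$. Consequently, the length of the longest repeat-prefix of $S[i\ldots n]$ equals the largest length of a common prefix that $S[i\ldots n]$ shares with any \emph{other} suffix of $S$.

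The crux is then to show that this maximum equals $L_i=\max\{\lcp[\rank[i]],\lcp[\rank[i]+1]\}$, i.e.\ that the largest common-prefix length a suffix shares with any other suffix is attained at one of its two neighbors in sorted order. This is where the main work lies, and I would derive it from the ``sandwiching'' property of sorted suffixes: if the suffixes at sorted ranks $p<q$ share a common prefix of length $\ell$, then every suffix whose sorted rank lies strictly between $p$ and $q$ must also begin with that same length-$\ell$ prefix, since it lies lexicographically between the two. From this it follows that, for the fixed rank $p=\rank[i]$, the common-prefix length of $S[i\ldots n]$ with the rank-$r$ suffix is non-increasing as $r$ moves away from $p$ in either direction. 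The values at the two immediate neighbors are $\lcp[\rank[i]]$ (predecessor) and $\lcp[\rank[i]+1]$ (successor) by the definition of the lcp array, so the overall maximum is exactly $L_i$. When $\rank[i]=1$ or $\rank[i]=n$ one neighbor is absent, and the padding convention $\lcp[1]=\lcp[n+1]=0$ makes the formula hold uniformly.

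Finally I would dispatch the two cases. If $L_i>0$, the longest repeat-prefix has length $L_i$, giving $\llr_i=S[i\ldots i+L_i-1]$. If $L_i=0$, then no other suffix shares even the first character, so $S[i]$ is a singleton, no repeat starts at $i$, and $\llr_i$ does not exist. I expect the sandwiching argument for the neighbor property to be the only non-routine step; everything else follows directly from the definitions of repeat, suffix array, and lcp array.
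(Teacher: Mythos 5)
Your proof is correct and takes essentially the same route as the paper's: both rest on the fact that $L_i=\max\{\lcp[\rank[i]],\lcp[\rank[i]+1]\}$ equals the maximum lcp between $S[i\ldots n]$ and any \emph{other} suffix of $S$, from which the two cases ($L_i>0$ and $L_i=0$) follow immediately. The only difference is one of completeness: the paper simply asserts this neighbor-maximality property as a known feature of the lcp array, whereas you actually prove it via the sandwiching argument on lexicographically consecutive suffixes, so your write-up fills in the step the paper leaves implicit.
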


\begin{proof}
  Note that $L_i$ is the length of the lcp between the suffix
  $S[i\ldots n]$ and any other suffix of $S$.  If $L_i > 0$, it means
  substring $S[i\ldots L_i-1]$ is the lcp among $S[i\ldots n]$ and any
  other suffix of $S$. So $S[i\ldots L_i-1]$ is $\llr_i$.  Otherwise
  ($L_i = 0$), the letter $S[i]$ is a singleton, so $\llr_i$ does not
  exist.
\end{proof}

\section{Longest repeat finding for one position}   
\label{sec:one}
In this section, we want to find $\lr_k$ for a given string position
$k$, using $O(n)$ time and space.  We present the solution to this
setting
here in 
case the practitioners have only a smaller number of 
string positions, for which they want to find the longest repeats, 
and thus this light-weighted solution will suffice.
We will start with finding the leftmost $\lr_k$ if the string position
$k$ is covered by multiple LRs. In the end of the section, we will
show a trivial extension to find all LRs covering position $k$ with
the same time and space complexities, if $k$ has multiple LRs.

\begin{lemma}
\label{lem:lr-llr}
Every LR is an LLR.
\end{lemma}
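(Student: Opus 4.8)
The plan is to take an arbitrary LR and show that it coincides with the LLR anchored at its own starting position. Concretely, suppose $S[i\ldots j]$ is an $\lr_k$ for some position $k$, so that $i \le k \le j$ and $S[i\ldots j]$ is a longest repeat covering $k$. I want to prove $S[i\ldots j] = \llr_i$, which is exactly what it means for this LR to be an LLR (since an LLR is indexed by its own left endpoint).

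First I would confirm that $\llr_i$ is defined. Because $S[i\ldots j]$ is a repeat, it admits a second occurrence $S[i'\ldots j']$ with $i' \ne i$ and $S[i'] = S[i]$; hence $S[i]$ is not a singleton, and by the remark following Definition~\ref{def:llr}, $\llr_i$ exists. Write $\llr_i = S[i\ldots m]$, the longest repeat that starts at position $i$. Since $S[i\ldots j]$ is itself a repeat starting at position $i$, and $\llr_i$ is by definition the \emph{longest} such repeat, we must have $m \ge j$.

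The crux is to rule out $m > j$. Suppose for contradiction that $m > j$. Then $S[i\ldots m]$ is a repeat, and since the left endpoint is still $i$ while $i \le k \le j < m$, the substring $S[i\ldots m]$ continues to cover position $k$. But $|S[i\ldots m]| = m - i + 1 > j - i + 1 = |S[i\ldots j]|$, contradicting the maximality of $\lr_k = S[i\ldots j]$ among all repeats covering $k$. Therefore $m = j$, so $S[i\ldots j] = \llr_i$, and every LR is an LLR.

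The argument is short, and I do not expect a genuine obstacle. The one point that needs care is the bookkeeping of the covering condition: the contradiction hinges on observing that extending a repeat purely to the right preserves the property of covering $k$. This is precisely what bridges the \emph{left-bounded} maximality defining an LLR (maximal length among repeats sharing the fixed left endpoint $i$) with the \emph{unrestricted} maximality defining an LR (maximal length among all repeats covering $k$), and it is what makes the two notions agree at the LR's starting position.
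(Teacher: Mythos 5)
Your proof is correct and follows essentially the same route as the paper's: both arguments hinge on the observation that if the repeat $S[i\ldots j]$ could be extended to the right while remaining a repeat, the extension would still cover $k$ and be strictly longer, contradicting the maximality of $\lr_k$. Your version merely packages this as the two-sided identity $\lr_k = \llr_i$ (using LLR maximality for $m \ge j$ and LR maximality for $m \le j$), whereas the paper states it directly as a contradiction, so the substance is identical.
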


\begin{proof}
Assume that $\lr_k=S[i\ldots j]$ is not an LLR. Note that $S[i\ldots
j]$ is a repeat starting from position $i$. If $S[i\ldots j]$ is not
an LLR, it means $S[i\ldots j]$ can be extend to some position
$j' > j$, so that $S[i\ldots j']$ is still a repeat and  also covers
position $k$. That says, $|S[i\ldots j']| > |S[i\ldots j]|$.
However, the contradiction is that $S[i\ldots j]$ is already the longest repeat
covering position $k$. 
\end{proof}

\begin{lemma}
\label{lem:llr-cover}
For any three string positions $i$, $j$, and $k$, $1\leq i < j \leq k
\leq n$: if
$\llr_j$ does not exist or does not cover position $k$, $\llr_i$ does
not exist or does not cover position $k$ either.
\end{lemma}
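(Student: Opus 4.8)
The plan is to prove the contrapositive, which is both cleaner and more intuitive than attacking the statement directly. Negating ``$\llr_j$ does not exist or does not cover position $k$'' gives exactly ``$\llr_j$ exists and covers position $k$'', so the lemma is logically equivalent to the claim: if $\llr_i$ exists and covers position $k$, then $\llr_j$ exists and covers position $k$. This reformulation turns the assertion into a natural monotonicity-under-suffix-shortening statement, which I would establish by transporting a witnessing second occurrence.

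First I would unfold the hypothesis of the contrapositive. Suppose $\llr_i = S[i \ldots e]$ for some end position $e$, and that it covers $k$, i.e.\ $e \geq k$. Because $\llr_i$ is a repeat, its definition guarantees a second occurrence: there is an index $i' \neq i$ with $S[i \ldots e] = S[i' \ldots i' + (e - i)]$.

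The key step is to shift this repeat from the start position $i$ to the start position $j$. Since $i < j \leq k \leq e$, the substring $S[j \ldots e]$ is a nonempty suffix of $S[i \ldots e]$, occupying offset $j - i$ inside it. Matching the two occurrences character by character, the same suffix must appear at offset $j - i$ inside the second copy, so $S[j \ldots e] = S[i' + (j - i) \ldots i' + (e - i)]$. This shifted copy begins at $i' + (j - i)$, which differs from $j = i + (j - i)$ precisely because $i' \neq i$. Hence $S[j \ldots e]$ occurs at two distinct positions and is therefore a repeat starting at $j$. Invoking the maximality built into the definition of $\llr_j$, we conclude that $\llr_j$ exists and that its end position is at least $e \geq k$, so $\llr_j$ covers position $k$, completing the contrapositive.

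The only point demanding care is verifying that the shifted occurrence is genuinely distinct from the occurrence at $j$; this is exactly where the ordering hypothesis $i < j$ enters, forcing $i' + (j-i) \neq j$ from $i' \neq i$. I expect this distinctness check to be the crux of the argument rather than a routine manipulation. (One could alternatively argue through the $\lcp$/$\rank$ characterization of Lemma~\ref{lem:llr}, but relating $L_i$ and $L_j$ is more cumbersome than the direct occurrence-shifting argument above, so I would favor the combinatorial route.)
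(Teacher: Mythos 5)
Your proof is correct, and it is structured genuinely differently from the paper's. The paper argues by contradiction with a two-case split on the hypothesis: if $\llr_j$ does not exist, then $S[j]$ is a singleton, and a repeat $\llr_i$ covering $k$ would contain that singleton; if $\llr_j = S[j\ldots t]$ exists but ends before $k$, then the one-character extension $S[j\ldots t+1]$ is unique by the LLR definition, yet it would sit inside the repeat $\llr_i = S[i\ldots r]$ (since $i < j$ and $t+1 \leq k \leq r$), contradicting uniqueness. Both cases lean on the standard but unproven fact that a substring of a repeat is a repeat. You instead prove the contrapositive in a single unified step: you transport the witnessing second occurrence of $\llr_i = S[i\ldots e]$ to exhibit a second occurrence of its suffix $S[j\ldots e]$ starting at $i' + (j-i) \neq j$, so $S[j\ldots e]$ is a repeat starting at $j$, whence $\llr_j$ exists and ends at or after $e \geq k$. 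Your route buys two things: it eliminates the case analysis entirely (non-existence of $\llr_j$ never needs separate treatment), and it is self-contained, since the occurrence-shifting argument is exactly a proof of the suffix-of-a-repeat-is-a-repeat fact that the paper invokes implicitly. The paper's route, by working directly with the uniqueness clause of Definition~\ref{def:llr}, never has to argue that $\llr_j$ attains length at least $e-j+1$; your appeal to ``maximality built into the definition'' is the one spot to make explicit, since strictly speaking it needs one more application of the same shifting idea (if $\llr_j = S[j\ldots t]$ ended at some $t < e$, then $S[j\ldots t+1]$ would be unique yet be a prefix of the repeat $S[j\ldots e]$, a contradiction). That is a one-line patch, not a gap in the approach.
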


\begin{proof}
  (1) If $\llr_j$ does not exist, then $S[j]$ is a singleton.  If
  $\llr_i$ does exist and covers position $k$, then $\llr_i$ also covers
 position $j$, which yields a contradiction that 
  the substring $\llr_i$ includes the singleton $S[j]$ but is a repeat. 
  (2) If $\llr_j=S[j\ldots t]$ does exist but does not cover position
  $k$, then $S[j\ldots t+1]$ is unique and $t+1 \leq k$. If $\llr_i$
  exists and covers position $k$, say $\llr_i = S[i\ldots r]$, 
  $r\geq k$, it means $S[j\ldots t+1]$ is a substring of a repeat
  $\llr_i=S[i\ldots r]$, because $i<j<t+1 \leq r$, so $S[j\ldots t+1]$
  is also a repeat. This contradicts to the fact that
  $S[j\ldots t+1]$ is unique. So   $\llr_i$ does not exist or does not
  cover position $k$. 
\end{proof}

The idea behind the algorithm for finding the LR covering a given
position is straightforward.  Algorithm~\ref{algo:one} shows the
pseudocode, where the found LR is returned as a tuple
$\langle start,length\rangle$, representing the starting position and the length of
the LR, respectively. 
If the LR that is being searched for does not
exist, $\langle -1,0\rangle $ is returned by Algorithm~\ref{algo:one}.
 We know that any longest repeat covering
position $k$ must be an LLR (Lemma~\ref{lem:lr-llr}), starting between
indexes $1$ to $k$ inclusive.
What we need to do is to simply compute every individual of $\llr_1
\dots \llr_k$ using Lemma~\ref{lem:llr} and check whether it covers
position $k$ or not.  We will just choose the longest LLR that covers
position $k$ and resolve the tie by picking the leftmost one if $k$ is
covered by multiple LRs (Line~\ref{line:tie}).
Due to Lemma~\ref{lem:llr-cover}, a practical speedup is possible via
an early stop (Line~\ref{line:break}) by computing and checking from
$\llr_k$ down to $\llr_1$ (Line~\ref{line:for}). 
   
\begin{algorithm}[t]
{\small
  \caption{Find $\lr_k$. Return the leftmost one if $k$ has multiple LRs.}
\label{algo:one}
\KwIn{The position index $k$, and the rank array and 
      the lcp array of the string $S$} 
\KwOut{$\lr_k$ or find no such LR. The leftmost one will be returned if $k$ has multiple
LRs.}

\smallskip 

$start \leftarrow -1$; $length \leftarrow 0$ 
\tcp*{start position and length of $\lr_k$}

\For{$i = k$ down to $1$\label{line:for}}{
  $L \leftarrow \max\{\lcp[\rank[i]],\lcp[\rank[i]+1]\}$\tcp*{Length
    of $\llr_i$}

  \If(\tcp*[f]{$\llr_i$ does not exist or does not cover $k$.}){$L=0$ or $i+L-1<k$}{break\tcp*{Early stop}\label{line:break}}
  \ElseIf(\tcp*[f]{Tie is resolved by picking the leftmost one.}){$L\geq length$\label{line:tie}}{$start\leftarrow i$; $length\leftarrow L$\;}
}
Print $\lr_k\leftarrow \langle start,length\rangle$\;
}
\end{algorithm}

\begin{lemma}
\label{lem:one}
Given the rank array and the lcp
array of the string $S$,
for any position $k$ in the string $S$, Algorithm~\ref{algo:one} 
can find $\lr_k$ or the
fact that it does not exist, using
$O(k)$ time and $O(n)$ space.  If there are multiple candidates for
$\lr_k$, the leftmost one is returned.
 \end{lemma}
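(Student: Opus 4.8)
The plan is to decompose the statement into a correctness claim and a complexity claim, and to prove them separately.

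For correctness, I would first reduce the search space. By Lemma~\ref{lem:lr-llr}, any $\lr_k$ is an LLR, and since it covers position $k$ its starting index $i$ satisfies $i \le k$; hence every candidate for $\lr_k$ lies in the set $\{\llr_i : 1 \le i \le k \text{ and } \llr_i \text{ covers } k\}$, and $\lr_k$ is a longest member of this set. The loop of Algorithm~\ref{algo:one} recomputes each $\llr_i$ via Lemma~\ref{lem:llr} as $S[i \ldots i+L-1]$ with $L = \max\{\lcp[\rank[i]],\lcp[\rank[i]+1]\}$, and the test ``$L=0$ or $i+L-1<k$'' detects exactly the case that $\llr_i$ does not exist ($L=0$) or does not cover $k$ ($i+L-1<k$, recalling that $i\le k$ throughout the loop). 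So the \emph{ElseIf} branch is entered precisely for the valid candidates, and it maintains in $\langle start,length\rangle$ the longest valid $\llr_i$ seen so far.

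The delicate point is to show that the early stop on Line~\ref{line:break} does not skip any valid candidate. Here I would invoke the contrapositive of Lemma~\ref{lem:llr-cover}: if $\llr_i$ exists and covers $k$, then for every $j$ with $i < j \le k$ the $\llr_j$ also exists and covers $k$. Consequently the set of indices $i\le k$ whose $\llr_i$ covers $k$ is a contiguous suffix $\{m, m+1, \ldots, k\}$ of $\{1,\ldots,k\}$ (possibly empty). Scanning downward from $i=k$, the break therefore fires for the first time exactly at $i=m-1$, after all valid candidates have already been examined; this justifies the early stop without loss of correctness. For tie-breaking, note that the scan visits indices in decreasing order and the update condition on Line~\ref{line:tie} uses ``$L \ge length$'' rather than strict inequality, so among all valid candidates attaining the maximum length, the one with the smallest starting index — the leftmost — overwrites the others last and is the one returned.

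For complexity, each iteration performs a constant number of array accesses, one maximum, and a few comparisons, so it costs $O(1)$; the loop runs at most $k$ times, giving $O(k)$ time. The only data structures are the rank and lcp arrays, each of size $O(n)$, plus $O(1)$ scalars, so the space is $O(n)$. I expect the main obstacle to be the early-stop argument: the contiguity of the valid index range is the crux, and it must be derived cleanly from Lemma~\ref{lem:llr-cover} (together with the observation that the downward scan never re-enters the valid region once it leaves), since everything else is a direct verification against the definitions.
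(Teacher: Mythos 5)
Your proposal is correct and follows essentially the same route as the paper: the paper's own proof of this lemma consists only of your final complexity paragraph (at most $k$ iterations of $O(1)$ work, plus $O(n)$ space for the rank and lcp arrays), while the correctness of the reduction to LLRs, the early stop, and the leftmost tie-breaking are handled informally in the text preceding the algorithm using exactly the lemmas you invoke (Lemma~\ref{lem:lr-llr}, Lemma~\ref{lem:llr}, and Lemma~\ref{lem:llr-cover}). Your explicit contiguity argument for the early stop — that Lemma~\ref{lem:llr-cover} makes the set of valid starting indices a suffix of $\{1,\ldots,k\}$ — is a faithful and slightly more rigorous rendering of what the paper asserts without proof, not a different approach.
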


 \begin{proof}
   The algorithm clearly has no more than $k$ steps and each step
   takes $O(1)$ time, so it costs a total of $O(k)$ time. The space cost is primarily
  from the rank array and the lcp array, which altogether is $O(n)$,
   assuming each integer in these arrays costs a constant number of
   bytes.  

 \end{proof}

\begin{theorem}
\label{thm:one}
For any position $k$ in the string $S$, we can find $\lr_k$ or the
fact that it does not exist, using
$O(n)$ time and  space.  If there are multiple candidates for
$\lr_k$, the leftmost one is returned.
\end{theorem}

\begin{proof}
  The suffix array of $S$ can be constructed by existing algorithms
  using $O(n)$ time and space (For ex., \cite{KA-SA2005}). After the
  suffix array is constructed, the rank array can be trivially created
  using $O(n)$ time and space.  We can then use the suffix array and
  the rank array to construct the lcp array using another $O(n)$ time
  and space~\cite{KLAAP01}.  Given the rank array and the lcp array,
  the time cost of Algorithm~\ref{algo:one} is $O(k)$
  (Lemma~\ref{lem:one}). So altogether, we can find $\lr_k$ or the
  fact that it does not exists using $O(n)$ time and space.
  If multiple LRs cover position $k$, the leftmost LR will be returned
  as is guaranteed by Line~\ref{line:tie} of
  Algorithm~\ref{algo:one}. 
\end{proof}


\smallskip 
\noindent
{\bf Extension: Find all LRs covering a given position.} 
It is trivial to extend Algorithm~\ref{algo:one} to find all the LRs
covering any given position $k$ as follows. We can first use
Algorithm~\ref{algo:one} to find the leftmost $\lr_k$. If $\lr_k$ does
exist, then we will start over again to recheck $\llr_k$ down to
$\llr_1$ and return those whose length is equal to the length of
$\lr_k$. Due to Lemma~\ref{lem:llr-cover}, the same early stop as we
have in Algorithm~\ref{algo:one} can be used for a practical
speedup. 
The pseudocode of this
procedure is provided in Algorithm~\ref{algo:one-all} in the appendix, which 
clearly costs an extra $O(k)$ time. Combining
Theorem~\ref{thm:one}, we have:

\begin{theorem}
\label{thm:one-all}
We can find all the LRs covering any given position $k$ using $O(n)$ time
and space.
\end{theorem}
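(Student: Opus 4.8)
The plan is to reuse the machinery already built for a single position and to add just one more linear scan. First I would carry out exactly the preprocessing of Theorem~\ref{thm:one}: construct the suffix array in $O(n)$ time and space, invert it to get $\rank$ in $O(n)$, and compute $\lcp$ in $O(n)$. Next I would run Algorithm~\ref{algo:one} once to obtain the leftmost $\lr_k$ together with its length $\ell$, at cost $O(k)$. If Algorithm~\ref{algo:one} reports that no LR exists (it returns $\langle -1,0\rangle$), then by Theorem~\ref{thm:one} no repeat covers $k$ and the output is empty. Otherwise $\ell>0$ is the maximum length of any repeat covering position $k$, and I would perform a second pass over $i=k,k-1,\ldots,1$, recomputing $L_i=\max\{\lcp[\rank[i]],\lcp[\rank[i]+1]\}$ and emitting the substring $S[i\ldots i+L_i-1]$ whenever $L_i=\ell$ and $i+L_i-1\ge k$. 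This is precisely the procedure sketched before the theorem and given as Algorithm~\ref{algo:one-all}.

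The correctness argument is the heart of the proof and proceeds through the lemmas already established. By Lemma~\ref{lem:lr-llr}, every LR covering $k$ is an LLR; since such a repeat starts at some position $i\le k$ (it must cover $k$) and since $\llr_i$ is unique, the LR coincides with $\llr_i=S[i\ldots i+L_i-1]$ by Lemma~\ref{lem:llr}. Thus the set of LRs covering $k$ is contained in the family $\{\llr_i : 1\le i\le k\}$. I would then verify the exact characterization: an LLR $\llr_i$ with $i\le k$ is an LR covering $k$ if and only if it covers $k$, i.e.\ $i+L_i-1\ge k$, and its length attains the maximum, i.e.\ $L_i=\ell$. The ``only if'' direction follows since any LR covering $k$ has length $\ell$ by Definition~\ref{def:lr}; the ``if'' direction follows because $\llr_i$ is then a repeat covering $k$ of maximal length $\ell$, hence an LR by Definition~\ref{def:lr}. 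Because distinct LRs covering $k$ necessarily have distinct starting positions (each position $i$ yields a unique $\llr_i$), the second pass enumerates every such LR exactly once, with no omissions or duplicates. I would also note that the early stop justified by Lemma~\ref{lem:llr-cover} may be applied in the second pass as a practical speedup without affecting correctness, since once $\llr_j$ fails to cover $k$ no $\llr_i$ with $i<j$ can cover $k$.

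For the resource bounds, the preprocessing is $O(n)$ time and space, the first invocation of Algorithm~\ref{algo:one} costs $O(k)$ time (Lemma~\ref{lem:one}), and the additional collection pass visits at most $k$ indices performing $O(1)$ work each, for another $O(k)$ time; the total output is at most $k$ substrings, each reported as a starting position and length. Since $k\le n$, the whole computation runs in $O(n)$ time, and the space is dominated by the $\rank$ and $\lcp$ arrays, which is $O(n)$, giving the claimed $O(n)$ time and space. There is no genuine algorithmic obstacle here; the only point that requires care is the bookkeeping in the correctness argument, namely pinning down the coverage condition $i+L_i-1\ge k$ alongside the length condition $L_i=\ell$ so that the emitted family is exactly the set of LRs covering $k$ rather than a proper superset (LLRs of length $\ell$ that miss $k$) or subset.
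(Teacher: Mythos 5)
Your proposal is correct and follows essentially the same route as the paper: run Algorithm~\ref{algo:one} once to get the length $\ell$ of the leftmost $\lr_k$, then make a second $O(k)$ pass over $\llr_k,\ldots,\llr_1$ emitting exactly those that cover $k$ and have length $\ell$, with the early stop justified by Lemma~\ref{lem:llr-cover}; this is precisely Algorithm~\ref{algo:one-all}. Your write-up is in fact more careful than the paper's, which labels the extension ``trivial'' and omits the explicit if-and-only-if characterization you verify.
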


\section{Longest repeat finding for every position}   
\label{sec:every}
In this section, we want to find $\lr_k$ of every position
$k=1,2,\ldots,n$. 
If any position $k$ is covered by multiple LRs, the
leftmost one will be returned.  
A natural solution is to iteratively use Algorithm~\ref{algo:one} as a
subroutine to find every $\lr_k$, for $k=1,2,\ldots,n$. However, the
total time cost of this solution will be $O(n)+\sum_{k=1}^n O(k) =
O(n^2)$, where $O(n)$ captures the time cost for the construction of
the rank array and the lcp array and $\sum_{k=1}^n O(k)$ is the total
time cost for the $n$ instances of Algorithm~\ref{algo:one}. We want to
have a solution that costs a total of $O(n)$ time and space, which follows
that the amortized cost for finding each LR is $O(1)$.

\subsection{A conceptual algorithm}
\label{subsec:overall}
We will first calculate $\llr_1,\llr_2,\ldots,\llr_n$ using
Lemma~\ref{lem:llr}, and save the results in an array $\llrs[1\ldots
n]$. Each LLR is represented by a tuple $\langle start,length\rangle$, the starting
position and the length of the LLR. We assign zero as the length of
any non-existing LLR, which does not cover any string position. We
then sort the $\llrs$ array in the descending order of the lengths of
the LLRs, using a stable and linear-time sorting procedure such as the
counting sort.  

\begin{definition}
\label{def:p}
After the $\llrs$ array is stably sorted, let $P_1$
denote the string positions that are covered by $\llrs[1]$, and $P_i$,
$2\leq i\leq n$, denote the string positions that are covered by
$\llrs[i]$ but are not covered by any of $\llrs[1\ldots i-1]$. Let
$|P_i|$ denote the number of string positions belonging to $P_i$.
\end{definition}

Note that any $P_i$, $i\geq 1$, can possibly be empty. 
Our conceptual algorithm will then assign $\llrs[i]$ as the LR of
those string positions belonging to $P_i$, if $P_i$ is not empty, for
$i=1,2,\ldots,n$. We store the LRs that we have calculated in an array
$\lrs[1\ldots n]$ of $\langle start, length\rangle$ tuples, where
$\lrs[i]=\lr_i$ and $\lrs[i].start$ and $\lrs[i].length$ represent the
starting position and length of $\lr_i$.  If $\lr_i$ does not exist,
the tuple $\langle -1,0\rangle$ will be assigned to $\lrs[i]$, which
can be done during the initialization of the $\lrs$ array.  Early stop
can be made when (1) we meet an $\llrs$ array element whose length is
zero, which indicates that all the remaining $\llrs$ array elements
also have lengths of zero; or (2) every string position has had their
LR calculated. Algorithm~\ref{algo:concept} shows the pseudocode of
this conceptual algorithm.


\begin{algorithm}[t]
{\small
  \caption{The conceptual algorithm for finding the leftmost LR for
    every non-singleton string position of $S$.}
\label{algo:concept}
\KwIn{The rank array and the lcp array of the string $S$} 
\KwOut{The leftmost LR covering every non-singleton string position of $S$.}

\smallskip 

\tcc{Calculate the $\llrs$ array using Lemma~\ref{lem:llr}. 
Initialize the $LRS$ array..}
\For{$i=1,2,\ldots,n$}{
 $\llrs[i] \leftarrow (i,\max\{\lcp[\rank[i]], \lcp[\rank[i]+1]\})$ 
  \tcp*{$\llr_i$, in the format of $\langle start, length\rangle$}
  $\lrs[i] \leftarrow \langle -1,0\rangle$ \tcp*{$\lr_i$, in the format of $\langle start, length \rangle$}
}

\smallskip 

Stably sort $\llrs[1\ldots n]$ in the descending order of its second
dimension \label{line:sort}\tcp*{e.g.: counting sort.} 

\bigskip 

\tcc{Find the leftmost LR for every position}

$count \leftarrow 0$ \tcp*{The number of non-singleton string positions that have their
  LRs calculated.} 

\For{$i=1,2,\ldots,n$}{
  \lIf{$count = n$ \emph{or} $\llrs[i].length = 0$}{
    break \tcp*{Early stop}
  }
  \lIf{$|P_i| = 0$}{continue\;}

  \lForEach{$k\in P_i$}{
    $\lrs[k] \leftarrow \llrs[i]$  
    \tcp*{Calculate the LRs of the positions belonging to $P_i$.}
  }
  $count \leftarrow count + |P_i|$\;
 }

 \Return{$\lrs[1\ldots n]$}

}
\end{algorithm}


\begin{lemma}
\label{lem:concept}
Algorithm~\ref{algo:concept} finds the LR for every position that does not
contain a singleton. It finds the leftmost LR if any
position is covered by multiple LRs.
\end{lemma}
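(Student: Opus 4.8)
The plan is to reduce the correctness of Algorithm~\ref{algo:concept} to a single structural fact: for every non-singleton position $k$, the quantity $\lr_k$ coincides, in length and up to tie-breaking, with the \emph{longest} $\llr$ that covers position $k$. Once this is in hand, the rest follows from the observation that the algorithm, by scanning the stably length-sorted $\llrs$ array and assigning $\llrs[i]$ exactly to the freshly-covered positions $P_i$ (Definition~\ref{def:p}), hands each position the first---hence longest---entry of the array that covers it.

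First I would prove the structural fact. Fix a non-singleton position $k$. By Lemma~\ref{lem:lr-llr}, $\lr_k$ is itself an $\llr$ covering $k$, so the longest $\llr$ covering $k$ has length at least $|\lr_k|$. Conversely, every $\llr$ is by Definition~\ref{def:llr} a repeat, hence any $\llr$ covering $k$ is a repeat covering $k$ and so has length at most $|\lr_k|$ by the maximality in Definition~\ref{def:lr}. Therefore the maximal length over all $\llr$'s covering $k$ equals $|\lr_k|$, and the set of $\llr$'s covering $k$ that attain this length is precisely the set of longest repeats covering $k$, i.e.\ the candidate choices for $\lr_k$.

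Next I would argue that, in the main loop, $k\in P_i$ holds exactly for the index $i$ at which $\llrs[i]$ is the first sorted entry that covers $k$: this is immediate from Definition~\ref{def:p}, since $P_i$ collects the positions covered by $\llrs[i]$ but by none of $\llrs[1\ldots i-1]$. Because the sort is in descending order of length, that first covering entry $\llrs[i]$ has the maximum length among all $\llr$'s covering $k$, which by the structural fact equals $|\lr_k|$; thus the assignment $\lrs[k]\leftarrow\llrs[i]$ records a genuine $\lr_k$. For the leftmost guarantee I would invoke stability: before sorting, $\llrs[j]$ has start position $j$, so the array is initially in increasing order of start, and a stable descending-length sort preserves this order among equal-length entries; hence among the longest $\llr$'s covering $k$ the one with the smallest start index appears earliest and is the one that claims $k$ through its $P_i$. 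Singleton positions are disposed of separately: an $\llr$ covering such a position would be a repeat containing a singleton character, which is impossible, so these positions never enter any $P_i$ and retain the initialized $\langle -1,0\rangle$, consistent with the lemma's stated scope.

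I expect the main obstacle to be the tie-breaking argument rather than the length argument: one must verify carefully that ``first sorted entry covering $k$'' is simultaneously of maximal length (from the sort key) and of minimal start (from stability), and that the $P_i$ bookkeeping realizes exactly ``first covering,'' so that no later entry---necessarily shorter, or of equal length but larger start---can overwrite the correct assignment. Confirming that the two early-stop conditions never skip a position still awaiting its $\lr$ is a routine check, since an entry of length zero covers nothing and the $count=n$ stop fires only once every non-singleton position has already been assigned.
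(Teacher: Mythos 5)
Your proof is correct and takes essentially the same route as the paper's: both rest on Lemma~\ref{lem:lr-llr} plus the descending-length processing order to conclude that each position in $P_i$ receives its longest covering LLR (hence a genuine LR), and on stability of the sort for the leftmost tie-break. You simply make explicit the details the paper's terse proof leaves implicit, namely the two-sided length argument equating $|\lr_k|$ with the maximum length of an LLR covering $k$, and the fact that the pre-sort $\llrs$ array is ordered by increasing start position so stability yields the leftmost candidate.
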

	    
\begin{proof}
\remove{
  We prove the lemma by induction. 
  We first assume every string
  position has at most one LR and will deal with the case where one
  position can be possibly covered by multiple LRs in the end of the
  proof. 
}
The proof of the lemma is obvious. 
Recall that every LR must be an LLR (Lemma~\ref{lem:lr-llr}) and we
process all LLRs in descending order of their lengths.
For $i=1,2,\ldots,n$, if $P_i$ is not empty, 
then for each position in $P_i$, 
the  substring $\llrs[i]$ is the
longest LLR that covers that position,
i.e., $\llrs[i]$ is the LR of
that position. 
\remove{
  Base case: We
  first observe that $\llrs[1]$ is the longest repeat among
  $\llr_1,\llr_2,\ldots,\llr_n$, unless every string position in $S$
  contains a singleton. Therefore,  $\llrs[1]$ is the LR of every position in $P_1$.
  
  Let's assume the conceptual algorithm produces correct LRs for
  positions that are covered by $\llrs[1\ldots i-1]$.  We will
  then show that it will produce correct LRs for positions that
  are covered by $\llrs[i]$ but are not covered by
  $\llrs[1\ldots i-1]$. For the purpose of proof based on
  contradiction, we assume that $\llrs[i]$ is
  not an LR for all those positions that are not covered by $\llrs[1\ldots
  i-1]$ but are covered by $\llrs[i]$, then this implies
there exists at lease one position $k$ of those positions, such that 
$k$ is covered by an LR that is longer than $\llrs[i]$.
From Lemma~\ref{lem:lr-llr}, we know that longer LR is also
  an LLR. Because we are processing LLRs in non-increasing order, 
  that longer LLR should appear before $\llrs[i]$. This contradicts
  with the fact that position $k$ is not covered by
  $\llrs[1\ldots i-1]$. So, $\llrs[i]$ is the LR of those
  positions that are covered by $\llrs[i]$ but are not covered
  by $\llrs[1\ldots i-1]$. 
}
In the case where any position in $P_i$
  has multiple LRs, $\llrs[i]$ must be the leftmost LR because
  of the stable sorting of the $\llrs$ array. 
\end{proof}

\subsection{High-level strategy for a fast implementation}
\label{subsec:table}
The challenge is to implement the conceptual algorithm
(Algorithm~\ref{algo:concept}) efficiently. Our goal is to use $O(n)$
time and space only, which is optimal since we have to spend $O(n)$
time and space to report all the LRs of all the $n$ distinct string
positions.
We start with some property of each $P_i$ (Definition~\ref{def:p}).
Recall that, in Algorithm~\ref{algo:concept}, we process all the LLRs
in the descending order of their lengths, and also all LLRs start from
distinct string positions. Therefore, after the $\llrs$ array is
sorted (Line~\ref{line:sort}, Algorithm~\ref{algo:concept}),
none of
$\llrs[1\ldots i-1]$ can be a substring of $\llrs[i]$,  for any $i\geq
2$. This yields the following fact.


\begin{fact}
\label{fact:p}
Every non-empty $P_i$, $i\geq 1$, is a continuous chunk of
string positions, i.e., every non-empty $P_i$ is an integer range $[s_i,e_i]$,
where $s_i$ and $e_i$ are the starting
and ending string positions of $P_i$. 
\end{fact}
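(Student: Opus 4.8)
The claim I need to establish is Fact~\ref{fact:p}: every non-empty $P_i$ forms a contiguous integer range $[s_i, e_i]$ of string positions. Let me think about what this is really saying.

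$\llrs[i]$ after sorting is some LLR, say $S[a \ldots b]$, covering positions $a, a+1, \ldots, b$. $P_i$ is the set of positions covered by $\llrs[i]$ but NOT covered by any earlier $\llrs[1], \ldots, \llrs[i-1]$. So $P_i \subseteq [a,b]$.

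The concern is: could $P_i$ have a "gap"? E.g., positions $a, a+1$ are new, position $a+2$ is already covered by an earlier LLR, then $a+3$ is new again? If that happened, $P_i$ wouldn't be a contiguous range.

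**The key insight from the text.**

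The excerpt gives me the crucial fact right before stating Fact~\ref{fact:p}:

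> none of $\llrs[1\ldots i-1]$ can be a substring of $\llrs[i]$

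Why is this true? Because we process LLRs in descending length order, so all of $\llrs[1], \ldots, \llrs[i-1]$ have length $\geq$ length of $\llrs[i]$. A strictly longer string can't be a substring of a shorter one. And even for equal lengths, LLRs start at distinct positions, so one can't be a substring of another unless they're identical — but distinct start positions with equal length means they're different intervals. Actually, equal-length distinct-start LLRs can't contain each other either. So no earlier LLR is a (sub)interval-containing-substring of $\llrs[i]$ in the relevant sense.

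Wait — let me be careful. The substring relationship here is about *position intervals*. $\llrs[j] = S[a_j \ldots b_j]$ is a "substring" of $\llrs[i] = S[a_i \ldots b_i]$ in the interval sense if $[a_j, b_j] \subseteq [a_i, b_i]$. The text's claim is that this containment never holds for $j < i$.

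**Now, how does non-containment give contiguity?**

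Here's the structure of my argument. $P_i$ consists of positions in $[a_i, b_i]$ not covered by earlier LLRs. A position $p \in [a_i, b_i]$ is covered by some earlier $\llrs[j]$ (interval $[a_j, b_j]$) iff $a_j \leq p \leq b_j$.

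I want to show the "already-covered" positions within $[a_i, b_i]$ don't create interior gaps. The key is that each earlier interval $[a_j, b_j]$, when intersected with $[a_i, b_i]$, can only cover a *prefix* or *suffix* of $[a_i, b_i]$ — it can't cover a middle chunk while leaving both ends uncovered. Why? Because if $[a_j, b_j]$ covered some middle portion with both $a_i$ and $b_i$ outside it, that would force $a_i < a_j$ and $b_j < b_i$, i.e., $[a_j, b_j] \subseteq [a_i, b_i]$ — exactly the containment the text forbids!

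So each earlier interval overlapping $[a_i, b_i]$ must contain $a_i$ or contain $b_i$ (or extend past the respective end). That means earlier intervals only chip away at $[a_i, b_i]$ from the left end or the right end. The union of a left-prefix-cover and a right-suffix-cover leaves the complement — the set $P_i$ — as a single contiguous middle range (possibly empty). That's exactly the claim.

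**My proof plan.**

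First I would formalize: let $\llrs[i] = S[a_i \ldots b_i]$, so $P_i \subseteq [a_i, b_i]$. Then I would show the text's lemma precisely: for $j < i$, $[a_j, b_j] \not\subseteq [a_i, b_i]$. This follows from two cases on length (strictly longer ⟹ can't fit; equal length with distinct start ⟹ can't be contained), invoking that sorting is by descending length and LLR starts are distinct. Second, I would argue that any earlier interval $[a_j, b_j]$ that intersects $[a_i, b_i]$ must satisfy either $a_j \leq a_i$ or $b_j \geq b_i$ — because the negation ($a_j > a_i$ and $b_j < b_i$) is exactly containment, contradicting step one. Third, I'd define the left-covered prefix and right-covered suffix of $[a_i, b_i]$ and show $P_i$ is their complement, hence a contiguous range $[s_i, e_i]$ (or empty).

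The main obstacle is step one, the non-containment property — specifically nailing the equal-length case cleanly. With distinct start positions and equal lengths, the two intervals are $[a_j, a_j + \ell - 1]$ and $[a_i, a_i + \ell - 1]$ with $a_j \neq a_i$; containment $[a_j, b_j] \subseteq [a_i, b_i]$ would force both $a_i \leq a_j$ and $a_j + \ell - 1 \leq a_i + \ell - 1$, i.e., $a_i = a_j$, a contradiction. So that case is clean once spelled out, and the rest is routine interval bookkeeping.
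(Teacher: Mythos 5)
Your proposal is correct and follows essentially the same route as the paper: the paper also derives the fact from the observation that, because the $\llrs$ array is sorted in descending order of length and all LLRs start at distinct positions, no earlier $\llrs[j]$ can be contained in the interval of $\llrs[i]$, so earlier intervals can only remove a prefix and/or a suffix of $\llrs[i]$'s coverage. You merely spell out the interval bookkeeping (including the equal-length, distinct-start case) that the paper leaves implicit when it says the observation ``yields'' the fact.
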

In the case where $P_i$ is empty, we set $s_i = e_i = -1$.
%
In order to achieve an overall $O(n)$-time implementation of
Algorithm~\ref{algo:concept}, we need a mechanism that can quickly
find $s_i$ using $O(1)$ time when processing each $\llrs[i]$. Then, if
$s_i\neq -1$, due to Fact~\ref{fact:p}, we can just linearly walk from
string position $s_i$ through the position $e_i$, which is either the
right boundary of $\llrs[i]$ or a string position whose next
neighboring position has had its LR calculated, whichever one
is reached first.  We will then set the LR of each visited position
during the walk to be $\llrs[i]$, achieving an overall $O(n)$ time
implementation of Algorithm~\ref{algo:concept}.

When we process a non-empty $\llrs[i]$ and calculate its $s_i$, there
are two cases.  Case 1: The string position $\llrs[i].start$ has not
had its LR calculated, then obviously $s_i = \llrs[i].start$.  Case 2:
The string position $\llrs[i].start$ has already had its LR
calculated, then it is either $s_i > \llrs[i].start$ (if $P_i$ is not
empty) or $s_i = -1$ (if $P_i$ is empty). In this case, it will not be
efficient to find $s_i$ by simply walking from $\llrs[i].start$ toward
$e_i$ until we reach $e_i$ or a string position whose LR has not been
calculated. It is not immediately clear how to calculate
$s_i$ using $O(1)$ time. This leads to the design of our following
mechanism that enables us to calculate every $s_i$ in Case 2 using
$O(1)$ time.

\subsection{The two-table system: the $ptr$ and $next$ arrays}
\remove{
\begin{definition}
  \label{def:alpha-k}
  For any string position $k$ that has had its LR calculated during
  the run of Algorithm~\ref{algo:concept}, we use $\alpha_k$ to denote
  the first string position that is after $k$ and has not had its LR
  calculated, if such string position exists; otherwise, we set
  $\alpha_k = n+1$, where $n$ is the length of the string $S$.
\end{definition}
}

Our mechanism is built upon two integer arrays, $ptr[1\ldots n]$ and
$next[1\ldots n]$.  We
update the two arrays online when we process the sorted $\llrs$ array
elements in the calculation of the LR of every string position.
\emph{Ideally}, we want to maintain these two arrays, such that for
any string position $k$ that has had its LR calculated,
$next\bigl[ptr[k]\bigr]$ is either the next after-$k$ string position
whose LR is not calculated yet or $n+1$ if no such after-$k$ string
position exists.  Then, when we process a particular non-empty
$\llrs[i]$, if the string position $\llrs[i].start$ has had its LR
calculated, we can either directly get $s_i$ or find the fact that all
string positions covered by $\llrs[i]$ have had their LRs calculated,
by comparing $next\Bigl[ptr\bigl[\llrs[i].start\bigr]\Bigr]$ and
$\llrs[i].start+\llrs[i].length-1$ (the right boundary of $\llrs[i]$).
However, it is not clear how to achieve such an ideal maintenance of
the $ptr$ and $next$ arrays in a time-efficient manner. This motivates
us to maintain these two arrays \emph{approximately}, which is to
maintain the following invariance. We will show later that such
approximate maintenance of the $ptr$ and $next$ arrays can still help
calculate every $s_i$ using $O(1)$ time.

\subsubsection{Invariance.}
We initialize every element of both $ptr$ and $next$ arrays to be $-1$.
Recall that after the $\llrs$ array
is sorted in descending order of the lengths of the LLRs, we process
every $\llrs[i]$, for $i=1,2,\ldots,n$. After we have finished the
processing of $\llrs[1\ldots i-1]$, for any $i\geq 2$, we want to
maintain the following invariance for the $ptr$ and $next$ arrays when processing $\llrs[i]$.

\begin{enumerate}
\item If $\llrs[i].start$ has already had its LR calculated but $|P_i|>0$, then:
  $$next\Bigl[ptr\bigl[\llrs[i].start\bigr]\Bigr]  =  s_i$$

\item If $|P_i|=0$ but $\llrs[i].length > 0$ (i.e.: $\llrs[i]$ is not
  empty), then $next\Bigl[ptr\bigl[\llrs[i].start\bigr]\Bigr]$ is
  larger than the index of the right boundary of $\llrs[i]$. That is,
  $$next\Bigl[ptr\bigl[\llrs[i].start\bigr]\Bigr]  > 
  \llrs[i].start + \llrs[j].length-1$$ 
\end{enumerate}

\subsubsection{Using the invariance.}
Recall that when we process a particular non-empty $\llrs[i]$, we want
to calculate $s_i$ quickly. The hard case is when the string position
$\llrs[i].start$ has already had its LR calculated.  Provided with the
above invariance of the $ptr$ and $next$ arrays, when we process a non-empty $\llrs[i]$, we will first check the
value of $ptr\bigl[\llrs[i].start\bigr]$. If it is not equal to $-1$,
the hard case occurs. Then, if
$next\Bigl[ptr\bigl[\llrs[i].start\bigr]\Bigr] \leq \llrs[i].start +
\llrs[i].length -1$ (the right boundary of $\llrs[i]$), we can assert
$s_i =next\Bigl[ptr\bigl[\llrs[i].start\bigr]\Bigr]$; otherwise, we
can assert $P_i$ is empty and thus will simply skip $\llrs[i]$.

\subsection{Maintaining the two-table system.}
\label{subsec:maintain}
In the following, we will first describe how we update the $ptr$ and
$next$ arrays when processing every $\llrs[i]$. In the end, we will
explain why the invariance is maintained using an overall $O(n)$ time.
Remind that the whole algorithm will early stop if $\llrs[i]$ is
empty, so we will only need to describe the algorithmic for
processing a non-empty $\llrs[i]$.  We first initialize every element
in both $ptr$ and $next$ arrays to be $-1$.  We will use the word
\emph{bucket} to denote a maximal and continuous area in the $ptr$
array where all entries share the same positive value.  So initially,
there is no bucket presented in the $ptr$ array.  Because all the
$\llrs$ array elements have been sorted in the descending order of
their lengths, the maintenance of the two-table system will only have
the following five cases to consider (Figure~\ref{fig:cases}).  We use
$left$ and $right$ to denote the indexes of the left and right
boundary of the $\llrs[i]$. That is,

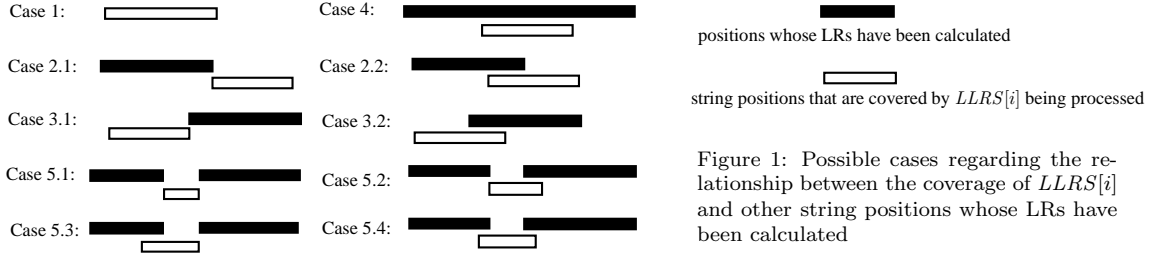
\begin{figure}[t]
  \centering
\begin{minipage}[c]{3.6in}
\scalebox{0.7} 
{
\begin{pspicture}(0,-2.3819044)(21.78787,2.4219043)
\psframe[linewidth=0.04,dimen=outer,fillstyle=solid](4.008242,2.2831345)(1.8482422,2.0431347)
\psframe[linewidth=0.04,dimen=outer,fillstyle=solid,fillcolor=black](3.9282422,1.2831343)(1.7682422,1.0431347)
\psframe[linewidth=0.04,dimen=outer,fillstyle=solid](5.448242,0.96313477)(3.8882422,0.72313434)
\psframe[linewidth=0.04,dimen=outer,fillstyle=solid,fillcolor=black](9.848242,1.3231347)(7.688242,1.0831347)
\psframe[linewidth=0.04,dimen=outer,fillstyle=solid](10.888242,1.0031346)(9.1282425,0.7631345)
\psframe[linewidth=0.04,dimen=outer,fillstyle=solid,fillcolor=black](5.608242,0.28313467)(3.4482422,0.043134566)
\psframe[linewidth=0.04,dimen=outer,fillstyle=solid](3.4882421,0.0031346655)(1.9282422,-0.23686534)
\psframe[linewidth=0.04,dimen=outer,fillstyle=solid,fillcolor=black](10.928242,0.24313462)(8.768242,0.0031346157)
\psframe[linewidth=0.04,dimen=outer,fillstyle=solid](9.488242,-0.07686538)(7.7282424,-0.31686538)
\psframe[linewidth=0.04,dimen=outer,fillstyle=solid,fillcolor=black](5.5869923,-0.79686534)(3.6482422,-1.0219043)
\psframe[linewidth=0.04,dimen=outer,fillstyle=solid](3.6669922,-1.1568655)(2.9682422,-1.3819042)
\psframe[linewidth=0.04,dimen=outer,fillstyle=solid,fillcolor=black](2.9869921,-0.79686534)(1.5682422,-1.0368654)
\psframe[linewidth=0.04,dimen=outer,fillstyle=solid,fillcolor=black](11.968243,-0.71686566)(9.808242,-0.9568653)
\psframe[linewidth=0.04,dimen=outer,fillstyle=solid,fillcolor=black](9.186993,-0.71686566)(7.626992,-0.9419043)
\psframe[linewidth=0.04,dimen=outer,fillstyle=solid](10.186993,-1.0368652)(9.146992,-1.3019043)
\psframe[linewidth=0.04,dimen=outer,fillstyle=solid,fillcolor=black](11.946992,2.3231347)(7.528242,2.0831347)
\psframe[linewidth=0.04,dimen=outer,fillstyle=solid](10.768242,1.9631345)(9.008243,1.7231345)
\usefont{T1}{ptm}{m}{n}
\rput(0.58564454,2.2030957){Case 1:}
\usefont{T1}{ptm}{m}{n}
\rput(6.4656444,2.2430956){Case 4:}
\usefont{T1}{ptm}{m}{n}
\rput(0.6856445,1.1630957){Case 2.1:}
\usefont{T1}{ptm}{m}{n}
\rput(6.6056447,1.1630957){Case 2.2:}
\usefont{T1}{ptm}{m}{n}
\rput(0.6856445,0.1630957){Case 3.1:}
\usefont{T1}{ptm}{m}{n}
\rput(6.6456447,0.123095706){Case 3.2:}
\psframe[linewidth=0.04,dimen=outer,fillstyle=solid,fillcolor=black](5.5469923,-1.7968653)(3.6482422,-2.0219042)
\psframe[linewidth=0.04,dimen=outer,fillstyle=solid](3.6669922,-2.1568654)(2.5469923,-2.3819044)
\psframe[linewidth=0.04,dimen=outer,fillstyle=solid,fillcolor=black](2.9869921,-1.7968653)(1.5682422,-2.0368652)
\psframe[linewidth=0.04,dimen=outer,fillstyle=solid,fillcolor=black](11.968243,-1.7168657)(9.808242,-1.9568653)
\psframe[linewidth=0.04,dimen=outer,fillstyle=solid,fillcolor=black](9.186993,-1.7168657)(7.626992,-1.9419043)
\psframe[linewidth=0.04,dimen=outer,fillstyle=solid](10.066992,-2.0368652)(8.946992,-2.3019042)
\usefont{T1}{ptm}{m}{n}
\rput(0.64564455,-0.8769043){Case 5.1:}
\usefont{T1}{ptm}{m}{n}
\rput(6.6456447,-1.9169043){Case 5.4:}
\usefont{T1}{ptm}{m}{n}
\rput(6.6456447,-0.9969043){Case 5.2:}
\usefont{T1}{ptm}{m}{n}
\rput(0.7256445,-1.9569043){Case 5.3:}
\psframe[linewidth=0.04,dimen=outer,fillstyle=solid](16.906992,1.0580957)(15.506992,0.8180957)
\psframe[linewidth=0.04,dimen=outer,fillstyle=solid,fillcolor=black](16.866993,2.3231347)(15.448242,2.0831347)
\usefont{T1}{ptm}{m}{n}
\rput(16.098047,1.7230957){positions whose LRs have been calculated}
\usefont{T1}{ptm}{m}{n}
\rput(17.298643,0.5230957){string positions that are covered by $\llrs[i]$ being processed}
\end{pspicture} 
}
\end{minipage}
\begin{minipage}[c]{2.2in}
\vspace*{15mm}
  \caption{Possible cases regarding the relationship between the coverage of $\llrs[i]$
  and other string positions whose LRs have been calculated}
\label{fig:cases}

\end{minipage}

\end{figure}

{\footnotesize
\begin{lstlisting}
    left  $\leftarrow$ LLRS[i].start; // the left boundary of LLRS[i]
    right $\leftarrow$ LLRS[i].start + LLRS[i].length - 1; // the right boundary of LLRS[i]
\end{lstlisting}
}

\noindent{\bf Case 1:} The coverage of $\llrs[i]$ does not connect to or
overlap with any string positions whose LRs have been calculated.  We
will create a bucket in the $ptr$ array covering the string positions
that are covered by $\llrs[i]$ and set up the corresponding $next$
array entry to be the string position that is right after the coverage
of $\llrs[i]$. The following code shows the case condition and the update
made to the $ptr$ and $next$ arrays.

\smallskip 

\begin{tabular}{l}
{\footnotesize
\begin{lstlisting}
if ptr[left] = -1 and ptr[right] = -1 and (left = 1 or ptr[left-1] = -1)
                  and (right = n or ptr[right+1] = -1)  //case 1
   for j = left...right: ptr[j] $\leftarrow$ i;
   next[i] $\leftarrow$ right + 1;
\end{lstlisting}
}
\end{tabular}

\smallskip 

\noindent{\bf Case 2:} The coverage of $\llrs[i]$ connects to or overlaps
with the right side of a string position area whose LRs have been
calculated. We will extend that area's corresponding $ptr$ array bucket to
the coverage of $\llrs[i]$ and update the corresponding $next$ array entry 
to be the string position that is right after the new bucket.

\smallskip 

\begin{tabular}{l}
{\footnotesize
\begin{lstlisting}
else if (ptr[left] = -1 and left $\neq$ 1 and ptr[left-1] $\neq$ -1) and ptr[right] = -1
                        and (right = n or ptr[right+1] = -1) //case 2.1
   for j = left...right: ptr[j] $\leftarrow$ ptr[left-1];
   next[ptr[left-1]] $\leftarrow$ right + 1;

else if (ptr[left] $\neq$ -1) and ptr[right] = -1
                         and (right = n or ptr[right+1] = -1)  //case 2.2
   for j = next[ptr[left]]...right: ptr[j] $\leftarrow$ ptr[left];
   next[ptr[left]] $\leftarrow$ right + 1;
\end{lstlisting}
}
\end{tabular}

\smallskip

\noindent{\bf Case 3:} The coverage of $\llrs[i]$ connects to or
overlaps with the left side of an existing string position area whose
LRs have been calculated. We will left-extend that area's corresponding $ptr$ array
bucket to the coverage of $\llrs[i]$. We need not to update the
corresponding $next$ array entry, since the string position that is
right after the new $ptr$ bucket does not change.

\smallskip

\begin{tabular}{l}
{\footnotesize
\begin{lstlisting}
else if (ptr[right] = -1 and right $\neq$ n and ptr[right+1] $\neq$ -1) and ptr[left] = -1
                         and (left = 1 or ptr[left-1] = -1) //case 3.1 
   for j = left...right: ptr[j] $\leftarrow$ ptr[right+1];

else if (ptr[right] $\neq$ -1) and ptr[left] = -1 
                          and (left = 1 or ptr[left-1] = -1) //case 3.2
   j $\leftarrow$ left;
   while ptr[j] = -1: ptr[j] $\leftarrow$ ptr[right]; j++;
\end{lstlisting}
}
\end{tabular}

\smallskip

\noindent{\bf Case 4: } Every string position covered by $\llrs[i]$ has
its LR calculated already. In this case, we simply do nothing. 

\smallskip 

\begin{tabular}{l}
{\footnotesize
\begin{lstlisting}
else if ptr[left] $\neq$ -1 and next[ptr[left]] > right: do nothing; //case 4
\end{lstlisting}
}
\end{tabular}

\smallskip

\noindent{Case 5:} The coverage of $\llrs[i]$ bridges two 
string position areas whose LRs have been calculated. 
We will extend the left area's corresponding $ptr$ array bucket up to
the left boundary of the
right area and update the $next$ array entry of the left area to be the
one of the right area. 

\smallskip

\begin{tabular}{l} 
{\footnotesize
\begin{lstlisting}
else
   if ptr[left] = -1: j $\leftarrow$ left; ptr_entry $\leftarrow$ ptr[left-1]; //case 5.1, 5.2
   else:              j $\leftarrow$ next[ptr[left]]; ptr_entry $\leftarrow$ ptr[left]; //case 5.3, 5.4
   while ptr[j] = -1: ptr[j] $\leftarrow$ ptr_entry; j++;
   next[ptr[ptr_entry]] $\leftarrow$ next[ptr[j]];
\end{lstlisting}
}
\end{tabular}

\smallskip

\begin{lemma}[Correctness]
\label{lem:table-correct}
The two-table system's invariance is maintained.
\end{lemma}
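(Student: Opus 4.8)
The plan is to argue by induction on the processing order $i=1,\dots,n$ of the sorted $\llrs$ array. The two clauses the lemma states only constrain the single entry $next\bigl[ptr[\llrs[i].start]\bigr]$ at the current step, so the real work is to isolate a \emph{global} invariant on the whole pair of arrays that is simultaneously strong enough to imply both clauses and weak enough to be re-established by each of the five constant‑amortized updates. The candidate invariant I would carry is: when step $i$ begins, $ptr[k]\neq -1$ exactly for the positions $k$ whose $\lr_k$ has already been assigned; those positions form disjoint maximal intervals; and each interval owns a \emph{live representative} entry $v$ with $next[v]=(\text{right end of the interval})+1$. Interior or superseded entries are allowed to hold stale $next$ values --- this is exactly the ``approximate'' maintenance the text announces, and it is what keeps the total cost linear.

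Before the case work I would pin down the geometry forced by the descending‑length, distinct‑start order. Since no already‑processed $\llrs[j]$ ($j<i$) is a substring of $\llrs[i]$ (the observation preceding Fact~\ref{fact:p}), the already‑assigned positions inside the coverage $[left,right]$ of $\llrs[i]$ can only form a prefix $[left,p]$ and/or a suffix $[q,right]$, the middle being $P_i$ (Definition~\ref{def:p}, Fact~\ref{fact:p}). The presence or absence of the prefix and the suffix, together with the fully‑covered degenerate situation, is precisely the five‑way split in the update code, so the branches are forced rather than ad hoc. I would also record one clean alignment fact: if $\llrs[i].start$ is already assigned, then position $\llrs[i].start-1$ is assigned too, because any earlier LLR covering $\llrs[i].start$ has a distinct --- hence strictly smaller --- start and therefore also covers $\llrs[i].start-1$. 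Granting the invariant and this fact, both clauses are immediate: a live representative read at $\llrs[i].start$ returns $b+1$ for the interval $[a,b]\ni\llrs[i].start$, which equals $s_i$ when $b+1\le right$ (so $|P_i|>0$) and exceeds $right$ when the coverage is already full ($|P_i|=0$).

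The inductive step is then the verification that each branch restores the invariant. The one‑sided branches are routine: Case~1 opens a fresh interval and sets $next[i]=right+1$; Cases~2.1 and~3.1 glue the coverage of $\llrs[i]$ onto an adjacent interval on a single side, reusing that interval's representative value and, where the right end moves, writing the new right boundary plus one into the representative. In each of these the entry read or written can be checked directly to be the interval's live representative, so the invariant survives.

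The hard part will be the branches that fill a gap or bridge two intervals --- Case~5 and the while‑loop Cases~2.2 and~3.2 --- since these are precisely where interior entries are created and where a stale $next$ could be consulted. The crux is to show that every entry the algorithm actually \emph{reads} is live: the read $next\bigl[ptr[left]\bigr]$ that locates $s_i$ (and that drives the fill in~2.2 and~5.3/5.4), the value $ptr[right]$ propagated leftward in~3.2, and the right‑interval representative $ptr[j]$ reached by the loop in Case~5; dually, the single $next$‑write each branch performs must land on the representative of the enlarged interval. I expect this cannot be done from the geometry alone: it needs the length ordering. For a position $k$ assigned at an earlier step, $\lr_k$ is at least as long as the repeat $\llr_k$ that merely starts at $k$ (Definition~\ref{def:lr}), so $\llr_k$ --- and hence the query at $k$ --- is processed no earlier than the step that assigned $k$; this is what guarantees that the representatives touched while still ``pending'' are exactly the live ones, while the entries that go stale belong to positions already queried. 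Getting the invariant's wording to be at once preserved by all five branches and sufficient for every read is, I expect, essentially the whole proof; the rest is bookkeeping over the five configurations.
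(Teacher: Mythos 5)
Your overall framing (induction over the sorted $\llrs$ array, assigned positions forming maximal intervals, ``live'' versus stale $next$ entries) parallels the paper's decomposition into tail and non-tail buckets, and your alignment fact is correct. But the crux of your plan --- proving that \emph{every entry the algorithm actually reads is live}, justified by the claim that ``the entries that go stale belong to positions already queried'' --- is false, and the length-ordering argument you offer ($|\lr_k|\geq|\llr_k|$, hence the query at $k$ is processed no earlier than the step that assigned $k$) cannot rescue it. That argument only shows the query at $k$ comes \emph{after} the assignment of $k$; it does not prevent the representative of $k$'s bucket from going stale in between. Concretely: let a Case~5 step merge a left area $A$, a gap, and a right area $B=[b_1,b_2]$ into one interval. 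The left bucket ($A$'s tail bucket extended through the gap) gets $next$ value $b_2+1$, which is exact at that moment. Later, a shorter LLR starting inside $B$'s portion (the new tail bucket) extends the interval past $b_2$; only the tail representative's $next$ is rewritten, so the left bucket's entry is now stale. Meanwhile a gap position $k$ assigned at the merge can perfectly well have its own $\llr_k$ still unprocessed (it merely has to be shorter than everything processed so far), and when $\llr_k$ is reached, the algorithm reads $next\bigl[ptr[k]\bigr]$, which is exactly the stale entry. So stale reads do occur, and an invariant asserting they do not cannot be preserved through the case analysis.

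What actually saves the algorithm --- and what the paper's proof supplies in its second part --- is a \emph{quantitative} property rather than liveness: whenever a bucket becomes non-tail (i.e., its representative may go stale), every position $i$ of that bucket satisfies $next\bigl[ptr[i]\bigr]-i >$ the length of every still-unprocessed LLR. This holds because the bridged-over right area $B$ contains the entire coverage of at least one previously processed, hence at least as long, LLR, so $|B|$ is at least the current LLR's length, while $next\bigl[ptr[i]\bigr]\geq b_2+1$ and $i\leq b_1-1$; and it persists because $next$ values only grow and unprocessed lengths only shrink. With this bound, a stale read is harmless: the returned value still exceeds the right boundary of the LLR doing the reading, so the algorithm correctly concludes $P_i$ is empty (clause 2 of the invariance); moreover it follows that Cases 2.2 and 5.3/5.4 can only fire when $left$ lies in a tail bucket, whose entry is exact, which yields clause 1 and ensures the single $next$-write of each branch lands on the correct representative. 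Your proof needs this gap property (or an equivalent) in the induction hypothesis; liveness of all read entries is simply not a property this data structure has.
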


\begin{proofsketch}
  Let us call a $ptr$ array bucket $ptr[i\ldots j]$ as a \emph{tail
    bucket} if $j=n$ or $ptr[j+1] = -1$.  (1) We first prove that the
  invariance is maintained on tail buckets.  Observe that any tail
  $ptr$ bucket is created in Case 1 (Figure~\ref{fig:cases}) and can
  be extended in Case 2 as well as in Case 3 if the black bucket in
  Case 3 was also a tail bucket.  The update to the tail bucket as
  well as the corresponding $next$ array entry guarantees that, for
  any $i$ belonging to the coverage of a tail bucket,
  $next\bigl[ptr[i]\bigr]$ is \emph{ideally} equal to the index of the
  string position that is right after the bucket (or $n+1$ if no such
  string position exists). So obviously the invariance is maintained.
  (2) We now prove the invariance is also maintained on non-tail
  buckets.  Observe that any non-tail bucket is created in Case 5 from
  the merge of the left black bucket and the new $\llrs[i]$'s
  coverage. After such non-tail bucket is created, for any position
  $i$ belonging to a non-tail bucket, $next\bigl[ptr[i]\bigr]$ is at
  least as large as the index of the string position that is following
  the right black bucket in Case 5. That means $next\bigl[ptr[i]\bigr]
  - i$ is larger than the size of any unprocessed $llrs$ array
  element. This guarantee is maintained, because every
  $next\bigl[ptr[i]\bigr]$ only monotonically increases.  So, the
  invariance is also maintained for non-tail buckets. (3) Because the
  invariance is well maintained for all $ptr$ buckets, it is safe to
  have the condition checking as we have written for Case 4.
\end{proofsketch}

\begin{lemma}[Time complexity]
\label{lem:table-time}
  The two-table system is maintained using a total of $O(n)$ time
  over the course of the processing of the $\llrs$ array elements. 
\end{lemma}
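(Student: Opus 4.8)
The plan is to bound the total running time by an amortized charging argument against the $ptr$ array. The key fact I would establish first is a structural invariant on how $ptr$ entries evolve over time: across all five cases, every entry $ptr[j]$ is only ever modified from its initial value $-1$ to a positive bucket value, and once it becomes positive it is never reset to $-1$, nor re-examined by any subsequent loop. I would verify this case by case. In Cases~1, 2.1, and 3.1 the loop runs over the whole coverage range $[left,right]$, and the explicit case conditions ($ptr[left]=-1$, $ptr[right]=-1$, together with the appropriate boundary tests on $ptr[left-1]$ and $ptr[right+1]$) guarantee that every position in that range is currently $-1$. In Cases~2.2 and 5.3/5.4 the loop deliberately \emph{starts} at $next[ptr[left]]$ rather than at $left$, thereby skipping the already-filled prefix of the bucket containing $left$; here I would invoke the invariance established in Lemma~\ref{lem:table-correct} to argue that $next[ptr[left]]$ is precisely the first position past that bucket, hence the first position that can still equal $-1$. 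Finally, in Cases~3.2 and 5 the \texttt{while ptr[j] = -1} loops terminate at the first non-$(-1)$ position by construction, so they too touch only $-1$ entries.

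Granting that observation, the core counting step is immediate. I would define the potential to be the number of entries of $ptr$ that still equal $-1$. This potential starts at $n$, never increases (since no positive entry ever reverts to $-1$), and is strictly decreased by one at each iteration of any \texttt{for} or \texttt{while} loop, because every such iteration converts a fresh $-1$ entry into a positive value. Consequently the total number of loop iterations, summed over the processing of all $\llrs$ array elements, is at most $n$.

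It then remains to account for the non-loop work. For each of the $n$ elements $\llrs[i]$, the case selection consists of a constant number of array look-ups and comparisons, and each selected case performs at most one assignment to the $next$ array (for instance $next[i]\leftarrow right+1$ in Case~1, or the single $next$-entry update at the end of Case~5), all of which cost $O(1)$. Summed over the $n$ elements this contributes $O(n)$. Combining this with the $O(n)$ bound on the total number of loop iterations yields the claimed overall $O(n)$ time.

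I expect the main obstacle to be rigorously justifying that no loop ever re-scans an already-filled position, i.e.\ that every loop iteration performs genuinely useful work on a new $-1$ entry. The delicate sub-cases are 2.2 and 5.3/5.4, where the loop is launched from $next[ptr[left]]$: the argument that this pointer lands exactly past the filled prefix, and that the remainder of the scanned range (which in Case~2.2 is traversed by a plain \texttt{for} loop with no $-1$ guard) consists entirely of $-1$ entries, leans directly on the maintained invariance of Lemma~\ref{lem:table-correct}. Once that dependence is made explicit, the potential-based count becomes routine and the $O(n)$ bound follows.
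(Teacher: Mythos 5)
Your proposal is correct and takes essentially the same approach as the paper's own proof sketch: both arguments charge all loop work to the conversion of $ptr$ entries from $-1$ to positive values (which happens at most $n$ times and is never undone), observe that $next$ updates are no more numerous, and note that the remaining per-element work is $O(1)$. Your case-by-case verification that the loops touch only $-1$ entries (and your flagging of Cases 2.2 and 5.3/5.4 as the spots relying on the correctness invariance) is a more detailed elaboration of what the paper compresses into a one-line observation.
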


\begin{proofsketch}
  Observe that the updates to the $ptr$ array are made only to those
  entries whose values were $-1$ and the new values from the updates
  are all positive. So there are no more than $n$ updates
  to the $ptr$ array. It is obvious that the number of updates made to
  the $next$ array is no more than the number of updates made to the
  $ptr$ array. Other than $ptr$ and $next$ array updates, the rest of
  the maintenance work for the two-table system when processing each
  $\llrs$ array element takes $O(1)$ time.  So the total time cost
  in maintaining the two-table system over the course of the
  processing of the whole $\llrs$ array is $O(n)$.
\end{proofsketch}

\subsection{The final $O(n)$ time and space algorithm.}
\label{subsec:final}


\begin{algorithm}[t]
{\small
  \caption{The $O(n)$ time and space algorithm for finding the
    leftmost LR for every non-singleton string position of $S$.}
\label{algo:every-one}
\KwIn{The rank array and the lcp array of the string $S$} 
\KwOut{The leftmost LR covering every non-singleton string position of $S$.}

\smallskip 

\tcc{Calculate the $\llrs$ array using Lemma~\ref{lem:llr}. \newline
Initialize the $LRS$ array  and the auxiliary $ptr$ and $next$
  arrays.}
\For{$i=1,2,\ldots,n$\label{line:init-start}}{
 $\llrs[i] \leftarrow \langle i,\max\{\lcp[\rank[i]], \lcp[\rank[i]+1]\}\rangle$ 
  \tcp*{$\llr_i$, in the format of $\langle start, length \rangle$}
  $\lrs[i] \leftarrow \langle -1,0\rangle$ \tcp*{$\lr_i$, in the format of $\langle start, length \rangle$}
  $ptr[i]\leftarrow -1$; \ \ \   
  $next[i]\leftarrow -1$\label{line:init-end} \;
}

\smallskip 

Stably sort $\llrs[1\ldots n]$ in the descending order of its second
dimension\label{line:sort-2} \tcp*{e.g.: counting sort.}

\bigskip 

\tcc{Find the leftmost LR for every position}

$count \leftarrow 0$ \tcp*{The number of non-singleton string positions that have their
  LRs calculated.} 

\For{$i=1,2,\ldots,n$\label{line:for-2}}{
  \lIf{$count = n$ \emph{or} $\llrs[i].length = 0$}{
    break\label{line:earlystop} \tcp*{Early stop}
  }

\smallskip

    $left  \leftarrow LLRS[i].start$;\ \ 
    $right \leftarrow LLRS[i].start + LLRS[i].length - 1$; \tcp{The boundaries of $LLRS[i]$.}

  \tcc{$first=s_i$ of $P_i=[s_i,e_i]$ if $P_i$ is not empty.}

  \lIf{$ptr[left]=-1$}{
    $first \leftarrow left$;
  }
  \lElse{
    $first \leftarrow next\bigl[ptr[left]\bigr]$\label{line:si}\;
  }

\smallskip

  \lIf{$first > right$}{
    continue \tcp*{Detect the fact that $P_i$ is empty.\label{line:detect}}
  }

\smallskip

\tcc{Calculate the the leftmost LR of every position in $P_i = [s_i,e_i]$.}
   $j \leftarrow first$\label{line:lr-start}\;
   \While{$j\leq right$ \emph{and} $ptr[j]=-1$}{
     $\lrs[j] \leftarrow \langle \llrs[i].start, \llrs[i].length\rangle $;
      $count \leftarrow count + 1$;
      $j\leftarrow j+1$\label{line:lr-end}\;
    }

    \smallskip 
    
    Update the two-table system here using the code presented in Section~\ref{subsec:maintain}\label{line:2-table}.
 }
    \Return{$\lrs[1\ldots n]$}
}
\end{algorithm}


By combining the conceptual Algorithm~\ref{algo:concept}, the
high-level strategy for the fast implementation, and the two-table
system's maintenance mechanism, we are ready to produce the final
$O(n)$ time and space algorithm that can find the leftmost LR of every
string position. Algorithm~\ref{algo:every-one} shows the pseudocode.
It starts with the calculation of the $\llrs$ array and the
initialization of the $\lrs$, $ptr$, and $next$ arrays
(Line~\ref{line:init-start}--\ref{line:init-end}).  It then sorts the
$\llrs$ array in the descending order of the lengths of the array
elements using a linear and stable sorting procedure
(Line~\ref{line:sort-2}). It then uses the {\tt for} loop
(Line~\ref{line:for-2}) to process every $\llrs$ array element with
possible early stop (Line~\ref{line:earlystop}).  Using the two-table
system, the value of $s_i$ is calculated by Line~\ref{line:si} if
$P_i$ is not empty; otherwise, the fact that $P_i$ is empty will also
be detected by Line~\ref{line:detect}. After $s_i$ is calculated,
finding the LR of each position in $P_i$ becomes obvious
(Line~\ref{line:lr-start}--\ref{line:lr-end}). After the LR finding
work is done, we will update the two-table system (Line~\ref{line:2-table}) using
the code presented in Section~\ref{subsec:maintain}.

\begin{lemma}
\label{lem:every-one}  
Given the lcp array and the rank array, Algorithm~\ref{algo:every-one}
calculates the leftmost LR of every non-singleton position of a string $S$
of size $n$ using a total $O(n)$ time and space.
\end{lemma}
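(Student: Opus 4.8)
The plan is to split the statement into a correctness claim and a complexity claim, and to reduce each to results already established. For correctness I would argue that Algorithm~\ref{algo:every-one} produces exactly the same $\lrs$ array as the conceptual Algorithm~\ref{algo:concept}, whose correctness (including leftmost tie-breaking) is guaranteed by Lemma~\ref{lem:concept}; for the complexity I would combine a simple amortized argument with Lemma~\ref{lem:table-time}.

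The first step of the correctness argument is to pin down the auxiliary linking invariant that, at the start of iteration $i$, $ptr[j]\neq -1$ holds if and only if position $j$ has already had its LR calculated (equivalently, $j$ lies in some $P_{i'}$ with $i'<i$). This holds initially (all entries are $-1$ and nothing is calculated) and is preserved because the walk on Lines~\ref{line:lr-start}--\ref{line:lr-end} assigns LRs to exactly the positions whose $ptr$ entries are subsequently filled in by the two-table update of Section~\ref{subsec:maintain}. Granting this invariant, I would verify that the value $first$ computed before the walk equals $s_i$ whenever $P_i\neq\emptyset$: in the easy case $ptr[left]=-1$ we have $left=\llrs[i].start$ uncalculated, so $s_i=left=first$; in the hard case $ptr[left]\neq -1$, the invariance of Lemma~\ref{lem:table-correct} gives $next[ptr[left]]=s_i$, which is exactly $first$ as set on Line~\ref{line:si}. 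The second clause of that same invariance shows that when $P_i=\emptyset$ but $\llrs[i]$ is nonempty we get $first>right$, so the \textbf{continue} on Line~\ref{line:detect} correctly skips $\llrs[i]$.

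Next I would argue the walk assigns $\llrs[i]$ to exactly the positions of $P_i$. Since $P_i=[s_i,e_i]$ is a contiguous range by Fact~\ref{fact:p}, and by the linking invariant every position in $P_i$ has $ptr=-1$ while its right neighbor $e_i+1$ (when $e_i<right$) has already been calculated, the loop while $j\le right$ and $ptr[j]=-1$, started from $first=s_i$, visits precisely $s_i,\ldots,e_i$ and sets their LR to $\llrs[i]$. Together with the early stop on Line~\ref{line:earlystop} (justified because the sorted $\llrs$ array is in descending order of length, so a zero-length entry guarantees that all remaining entries also have length $0$ and therefore cover no string position), this makes Algorithm~\ref{algo:every-one} perform the same LR assignments, in the same processing order, as Algorithm~\ref{algo:concept}; correctness and the leftmost resolution then follow from Lemma~\ref{lem:concept}.

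For the complexity I would observe that the initialization loop and the counting sort each cost $O(n)$; the total work of all walks is $O(n)$ because each string position has its LR assigned exactly once, after which its $ptr$ entry is nonzero and the guard $ptr[j]=-1$ prevents it from being revisited; and the total cost of all two-table updates is $O(n)$ by Lemma~\ref{lem:table-time}. All remaining per-iteration bookkeeping is $O(1)$, and the four arrays $\llrs,\lrs,ptr,next$ occupy $O(n)$ space, giving the claimed bounds. The main obstacle is the correctness half, specifically justifying that the \emph{approximate} maintenance of the two tables nonetheless yields the \emph{exact} value of $s_i$ in the hard case as well as the correct empty-$P_i$ test; this hinges entirely on a careful application of the two clauses of the invariance from Lemma~\ref{lem:table-correct}, so the most delicate reasoning is already quarantined in that lemma, and here it only needs to be invoked cleanly.
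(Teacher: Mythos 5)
Your proposal is correct and follows essentially the same route as the paper's own proof sketch: correctness is reduced to Lemma~\ref{lem:concept} together with the two-table invariance of Lemma~\ref{lem:table-correct}, and the $O(n)$ bound is obtained from the $O(n)$ initialization and sorting, the fact that each $\lrs$ entry is written at most once, and Lemma~\ref{lem:table-time}. The only difference is that you spell out the intermediate bookkeeping (the $ptr[j]\neq -1$ linking invariant and the verification that $first=s_i$) that the paper leaves implicit, which strengthens rather than changes the argument.
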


\begin{proofsketch}
  (1)\emph{Correctness.} The correctness of Algorithm~\ref{algo:every-one} immediately
  follows from of Lemma~\ref{lem:concept} and
  Lemma~\ref{lem:table-correct}.  (2) All data structures that are
  being involved are the $LCP$, $Rank$, $\llrs$, $\lrs$, $ptr$, and
  $next$ arrays.  Altogether they use $O(n)$ space.  (3) The time cost
  for the initialization
  (Line~\ref{line:init-start}--\ref{line:init-end}) takes $O(n)$ time.
  the stable sorting (Line~\ref{line:sort-2}) uses $O(n)$ time.  The
  rest of the work (Line~\ref{line:for-2}--\ref{line:2-table}) also
  takes $O(n)$ time, because we update every $\lrs$ array element no
  more than once and the two-table system maintenance also takes
  $O(n)$ time (Lemma~\ref{lem:table-time}). So the total time cost
  is $O(n)$.
\end{proofsketch}

\begin{theorem}
\label{thm:every-one}  
Given a string $S$ of size $n$, we can calculate the leftmost LR of every of string position 
using $O(n)$ time and space.   
\end{theorem}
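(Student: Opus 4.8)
The plan is to assemble the ingredients that have already been established rather than to prove anything new from scratch; the theorem is the top-level packaging of the machinery built up in Section~\ref{sec:every}. First I would handle the preprocessing exactly as in the proof of Theorem~\ref{thm:one}: construct the suffix array of $S$ in $O(n)$ time and space by an existing suffix-array construction algorithm (e.g.~\cite{KA-SA2005}), derive the rank array from it trivially in $O(n)$ by inverting the permutation, and then build the lcp array from the suffix array and rank array in another $O(n)$ time and space~\cite{KLAAP01}. This stage produces exactly the two inputs that Algorithm~\ref{algo:every-one} requires.

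Next I would invoke Lemma~\ref{lem:every-one}, which states that, given the rank and lcp arrays, Algorithm~\ref{algo:every-one} computes the leftmost $\lr_k$ for every non-singleton position $k$ in a total of $O(n)$ time and space. Since the preprocessing and the algorithm each cost $O(n)$, the overall budget is $O(n)$ time and space, matching the claim. I would then explicitly dispose of the only case Lemma~\ref{lem:every-one} leaves implicit: a position $k$ at which $S[k]$ is a singleton has no repeat covering it, so $\lr_k$ does not exist; this is correctly reported because every entry of the $\lrs$ array is initialized to $\langle -1,0\rangle$ and such positions are never updated (their $\llrs$ length is $0$ and they trigger the early stop or are simply never placed into any nonempty $P_i$).

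Finally I would note the tie-breaking guarantee: when a position is covered by several longest repeats of equal length, the \emph{leftmost} one is returned, which follows from the stable sort of the $\llrs$ array together with Lemma~\ref{lem:concept}. There is no real obstacle here, since correctness and the time bound of Algorithm~\ref{algo:every-one} are already discharged by Lemmas~\ref{lem:every-one}, \ref{lem:table-correct}, and \ref{lem:table-time}; the only point requiring a sentence of care is confirming that the singleton positions and the leftmost-tie convention are handled consistently with Definition~\ref{def:lr}, after which the theorem follows by direct combination of the cited costs.
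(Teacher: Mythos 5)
Your proposal is correct and follows essentially the same route as the paper's own proof: build the suffix array, rank array, and lcp array in $O(n)$ time and space via \cite{KA-SA2005} and \cite{KLAAP01}, then invoke Lemma~\ref{lem:every-one} for Algorithm~\ref{algo:every-one}. Your extra remarks on singleton positions and the leftmost tie-breaking convention are sound additions that the paper leaves implicit (they are already covered by the initialization of the $\lrs$ array and Lemma~\ref{lem:concept}), but they do not change the argument.
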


\begin{proofsketch}
  We can construct the suffix array of the string $S$ in a total of
  $O(n)$ time and space using existing algorithms (For ex.,
  \cite{KA-SA2005}).  The rank array is just the inverse suffix array
  and can be directly obtained from SA using $O(n)$ time and
  space. Then we can obtain the lcp array from the suffix array and
  rank array using another $O(n)$ time and space~\cite{KLAAP01}. So
  the total time and space costs for preparing the rank and lcp arrays
  are $O(n)$. The proof of the theorem can then immediately follow
  from  Lemma~\ref{lem:every-one}.
\end{proofsketch}

\remove{

\subsection{Extension: finding all LRs of every string position (REMOVE IT !)}
\label{subsec:ext}

It is possible that a particular position can have multiple LRs.  For
example, if $S={\tt abcabcddbca}$, then $\lr_2$ can be either
$S[1\ldots 3]={\tt abc}$ or $S[2\ldots 4]={\tt bca}$.  However,
Algorithm~\ref{algo:every-one} only returns one of them and resolve
the tie by picking the leftmost one. However, it is easy to modify
Algorithm~\ref{algo:every-one} to return all the LRs of every string
position, without changing the mechanism that maintains the
two-table system.

Idea: 

When processing an $\llrs[i]$,

(1) if its left-end has its lr already calculated in the past, simply
travel from $next\Bigl[ptr\bigl[\llrs[i].start\bigr]\Bigr]$ toward the
left end of $\llrs[i]$ and will stop when seeing a position whose lr's
length is larger than $\llrs[i].length$. Those positions travelled
through will have another LR just calculated as $\llrs[i]$.

(2) similarly handle the right side. 

}


\section{Conclusion}
In this paper, we proposed the problem of finding longest repeats
covering particular string positions, motivated by its applications in
subfields such as computational biology. We proposed optimal
algorithms for finding the (leftmost) longest repeat of every string
position using a total of $O(n)$ time and space based on a novel
two-table system that we designed. We have implemented our
algorithms. Future work can be an
experimental study of the implementation.

\section*{References}

\small 

\bibliographystyle{elsarticle-num}

\bibliography{bibjsv,repeat,pm}

\newpage

\appendix

\section*{Appendix}

\begin{table}[h!]
\center
\def\0{\phantom{0}}
{\footnotesize
\begin{tabular}{c|c|c|l}
\hline 
$i$ & $\lcp[i]$  & $\mathit{\sa}[i]$ & suffixes\\
\hline
\hline
\01 & 0 & 11\0  &{\tt i}\\
\02 & 1 & \08\0  & {\tt  ippi}\\
\03 & 1 & \05\0  & {\tt  issippi}\\
\04 & 4 & \02\0  & {\tt  ississippi}\\
\05 & 0 & \01\0  & {\tt  mississippi}\\
\06 & 0 & 10\0  & {\tt  pi}\\
\07 & 1 &  \09\0  & {\tt ppi}\\
\08 & 0 & \07\0  & {\tt sippi}\\
\09 & 2  & \04\0  & {\tt sissippi}\\
10 & 1  & \06\0  & {\tt ssippi}\\
11 & 3 & \03\0  & {\tt ssissippi}\\
12 & 0 & -- & --\\
\hline
\end{tabular}
}
\bigskip
\caption{The suffix array and the lcp array of an example string $S={\tt mississippi}$.}
\label{tab:suflcp}
\end{table}


\begin{algorithm}[h!]
{\small
  \caption{Find all LRs that cover a given position $k$}
\label{algo:one-all}
\KwIn{The position index $k$, and the rank array and 
      the lcp array of the string $S$} 
\KwOut{All LRs that cover position $k$ or find no such LR.}

\smallskip 

\tcc{Find the length of $\lr_k$.}
$length \leftarrow 0$\;
\For{$i = k$ down to $1$}{
  $L \leftarrow \max\{\lcp[\rank[i]],\lcp[\rank[i]+1]\}$\tcp*{Length
    of $\llr_i$}

  \If(\tcp*[f]{$\llr_i$ does not exist or does not cover $k$.}){$L=0$ or $i+L-1<k$}{break\tcp*{Early stop}}
  \ElseIf{$L\geq length$}
   {$length\leftarrow L$\;}
}


\smallskip 

\tcc{Print all LRs that cover position $k$.}

\If(\tcp*[f]{$\lr_k$ does exist.}){$length>0$}{
  \For{$i = k$ down to $1$}{
    $L \leftarrow \max\{\lcp[\rank[i]],\lcp[\rank[i]+1]\}$\tcp*{Length
      of $\llr_i$}
   \If(\tcp*[f]{$\llr_i$ does not exist or does not cover $k$.}){$L=0$ or $i+L-1<k$}{break\tcp*{Early stop}}
    \ElseIf{$L = length$}
    {Print $\lr_k\leftarrow \langle i,length \rangle$\;}
  }
}
\lElse{Print $\lr_k\leftarrow \langle -1,0\rangle$\tcp*{$\lr_k$ does not exist.}}
}
\end{algorithm}

\end{document}